\numberwithin{equation}{section}
\font\chuto=cmbx10 at 14pt
\font\ita=cmti9
\font\ita=cmti9   
\newtheorem{thm}{Theorem}[subsection]
\newtheorem{prop}[thm]{Proposition}%{Mệnh đề}%{Proposition}%
\theoremstyle{definition}
\newtheorem{defn}[thm]{Definition}%{Định nghĩa}
\newtheorem{rem}[thm]{Remark}%{nhận xét}
\newtheorem{assum}[thm]{Assumption}%{Các giả thiết}
\newtheorem{dataso}[thm]{Data sources}%{Nguồn dữ liệu}
\newcommand{\R}{\mathbb{R}}
\title{\vspace{-2cm}
	\vspace{2cm}{\chuto Some Applications of Lie Groups in Theory of Technical Progress}
	\footnotetext{\\
		{\bf Keywords:} Production function, Lie group, technical progress, holotheticity\\
		{\bf MSC (2010):} \footnotesize{Primary 22E70;  Secondary 91B62, 91B38.\\
			{\bf Financial Support:} The paper was supported by the University Economics and Law, Viet Nam National University - Ho Chi Minh City, under grant C2019-34-07.
}}} 
\author{{\bf Le Anh Vu$^*$, Duong Quang Hoa$^\dagger$,}\\
	{\bf Nguyen Minh Tri$^{**}$, Ha Van Hieu$^{\ddag}$}\\
	\vspace{-0.2cm}{\ita $^*$ University of Economics and Law, VNU-HCMC, Viet Nam}\\
	\vspace{-0.2cm}{\ita Email: vula@uel.edu.vn}\\
	\vspace{-0.2cm}{\ita $^\dagger$ Hoa Sen University, Ho Chi Minh City, Viet Nam}\\
	\vspace{-0.2cm}{\ita Email: hoa.duongquang@hoasen.edu.vn}\\
	\vspace{-0.2cm}{\ita ${}^{**}$ University of Economics and Law, VNU-HCMC, Viet Nam}\\
	\vspace{-0.2cm}{\ita Email: trinm@uel.edu.vn}\\
	\vspace{-0.2cm}{\ita ${}^{\ddag}$ University of Economics and Law, VNU-HCMC, Viet Nam}\\
	\vspace{-0.2cm}{\ita Email: hieuhv@uel.edu.vn}}
\begin{document}

\date{}
\maketitle

\begin{abstract}
    In recent decades, we have known some interesting applications of Lie theory in the theory of technological progress. Firstly, we will discuss some results of R. Saito in \cite{rS1980} and \cite{rS1981} about the application modeling of Lie groups in the theory of technical progress. Next, we will describe the result on Romanian economy of G. Zaman and Z. Goschin in \cite{ZG2010}. Finally, by using Sato's results and applying the method of G. Zaman and Z. Goschin, we give an estimation of the GDP function of Viet Nam for the 1995-2018 period and give several important observations about the impact of technical progress on economic growth of Viet Nam.
\end{abstract}

\maketitle

% main content begins here
%: Section 1. Introduction and Motivation
\section*{Introduction and Motivation}
The study of the influence of technological progress on economic growth was first introduced in 1957 by Solow (see \cite{mS1957}) when he showed that a substantial portion of the increase in per capita output in the United States cannot be explained by growth in the capital-labor ratio. The unexplained portion is attributed to ``technical progress".
	
Clearly, the initial concept of ``technical progress" is quite broad and ambiguous, everything (except labor and capital) can be considered ``technical progress". In this conglomerate of factors that enter into ``technical progress'', there is one which is of special interest for the present discussion, and that is the effects of scale.
	
Problem distinguishing between ``scale effect'' and ``technical progress'' is the motivation of this topic. In 1980 and 1981, R. Saito \cite{rS1980}, \cite{rS1981} summarized and introduced some effective applications of Lie theory in solving this problem as well as in the theory of technical progress and economic growth, in general.
        
Basically, there are two directions to build models for studying the impact of technical progress on economic growth based on the views: technical progress is endogenous or exogenous. The endogenous or exogenous nature of the technical progress refers to its source: the endogenous change is internal to the national economy,
being created by domestic private or public enterprise, while the exogenous change is external, originating from foreign sources (see {\cite[Introduction]{ZG2010}}).

In 2010, G. Zaman and Z. Goschin \cite{ZG2010} developed several models to estimate the aggregated production function GDP of Romania for the 1990-2007 period, in both directions mentioned above to assess the impact of technical progress on the economic growth of Romania. 

In this paper, by using Sato's results and applying the method which is similar to the one of G. Zaman and Z. Goschin, we would like to study the impact of technical progress on economic growth of Viet Nam. Namely, we use data provided by the General Statistics Office of Viet Nam for the  1995--2018 period to analyze the impact of technical growth on the economy of Viet Nam. The main result of the paper is Theorem \ref{T233} in which we give an estimation of the GDP function of Viet Nam for the 1995-2018 period and several important observations about the impact of technical progress on the economic growth of Viet Nam.
		
The paper is organized as follows. In Section 1, we introduce some concepts and Sato's results in \cite{rS1980} and \cite{rS1981}. In the first part of Section 2, we will reiterate the result of G. Zaman and Z. Goschin as an illustrative example. Finally, we present and analyse the main result about the aggregated production function GDP of Viet Nam.		

%section 1 Preliminaries		
\section{Preliminaries}\label{Sec1}     
%\section{Holotheticity and the Group Properties \\ of Technical Progress Functions}

In this section, we will recall related notions and discuss some results in \cite{rS1980} and \cite{rS1981}, before going into the main results in Section 2.

%subsection 1.1 The Product Function
\subsection{The Product Function (\cite{rS1981}, Section 2)}
Consider a general (strictly) quasi-concave and a continuously differentiable neoclassical production function with the usual properties,
     \begin{equation}\label{EQ1}
                Y = f(K, L) 
     \end{equation}
where $ Y $ is the output, $ K $ is the capital, $ L $ is the labor of the production process. Of course, $K > 0$, $L > 0$.
            
Although the exogenous changes that affect the factor combination in a production process may result from many different forces such as new inventions and new applications of known technology, we may simply             identify them as ``technical change'' or ``technical progress'' and represent it by a parameter $t$ (time). Here, of course $t$ belongs in a finite subset of straight line $\R$.

Assume that when exogenous technical progress is introduced, it will not change the form of the production function, but it will change the output level by affecting the way in which the factor inputs are combined, i.e.,
     \begin{equation}\label{EQ2}
                Y_{t} = f(K, L, t) = f\left(K_{t}, L_{t}\right) \tag{{\ref{EQ1}}'}
     \end{equation}
    
%subsection 1.2. Technical Progress Function
\subsection{Technical Progress Functions (\cite[Section 2]{rS1981})} 
     \begin{defn} When exogenous technical progress $ t $ is introduced, it will change the way in which $ K $ and $ L $ are combined. The family $T : = \{T_t\}$ of pairs $T_t = (\bar{K}, \bar{L})$, where $\bar{K} = \varphi (K, L, t) = \varphi_{t}(K, L)$ and $\bar{L} = \psi (K, L, t)\bigr) = \psi_{t}(K, L)$ are the functions which combine the factor inputs through the technical progress parameter $ t $, is called {\it a family of technical progress functions} of $ K $ and $ L $ or simply {\it a technical progress}. 
     	
     In other words, a family $T = \{T_t\}$ of technical progress functions of $ K $ and $ L $ is given as follows 
            \begin{equation}\label{EQ:Tt}
                    \left\{T_{t} := \bigl( \varphi_{t}(K, L), \psi_{t}(K, L) \bigr) \right\}
                    \text{ or simply } T_{t} := \bigl( \varphi_{t}, \psi_{t} \bigr)
                \end{equation}
             where $\bar{K} = \varphi_{t}(K, L) = \varphi (K, L, t)$ is exactly the new capital (``effective" capital) and $\bar{L} =  \psi_{t}(K, L) = \psi (K, L, t)$ is exactly new labor (``effective" labor) when technical change has been integrated into them.
 
For any technical progress $T = \{T_t\}$, the components $\varphi = \varphi (K, L, t)$ and $\psi = \psi (K, L, t)\bigr)$ are always supposed to be generally analytic, real functions of the three variables $ K $, $ L $, and $ t $. Besides, $ \varphi $ and $ \psi $   are independent functions with respect to $ K $ and $ L $ alone, i.e.,
	\[ 
		\left|
			\def\arraystretch{2.2}
				\begin{array}{cc}
				\dfrac{\partial \varphi}{\partial K} & \dfrac{\partial \varphi}{\partial L}\\*[\baselineskip]
				\dfrac{\partial \psi}{\partial K}      & \dfrac{\partial \psi}{\partial L}
			\end{array}
		\right| \neq 0; \forall t.
	\]
So that Equation (\ref{EQ:Tt}) can be solved for $ K $ and $ L $ to receive 
     \begin{equation}\label{EQ:Tt1}
		\bar{K} = K(t) = K_{t},  \, \, \, \bar{L} = L(t) = L_{t}  \, \, \, \mathrm{and}  \, \, \, T = \{T_t = (K_t, L_t) \}
	\end{equation} 
\end{defn}

%Assumptions 1.2.2            
\begin{assum}\label{Ass122}{\bf (The Lie Group Properties)}\\
For any technical progress $T = \{T_{t} \}$  in (\ref{EQ:Tt}) or (\ref{EQ:Tt1}), we always assume that it possesses the Lie group properties, i.e. $T = \{T_{t} \}$ satisfies the following conditions:
\begin{enumerate}
	\item[(1)] $ T_{t} \circ T_{\bar{t}} = T_{t+\bar{t}}; \, \, t \in \R $;
	\item[(2)] $ T_{t}^{-1} = T_{-t}; \, \, t \in \R $;
	\item[(3)] $ T_{0} = Identity $.
\end{enumerate}
\end{assum}

%Definition 1.2.3
\begin{defn}\label{D123}{\bf (The Lie Type of Technical Progress)}\\
	Every family of technical progress functions $T = \{T_{t} \}$ in (\ref{EQ:Tt}) or (\ref{EQ:Tt1}) which satisfies the conditions (1), (2), (3) in Assumption \ref{Ass122} is called {\it a technical progress of the Lie type} or {\it a Lie type technical progress}.
\end{defn}

Note that, any Lie type technical progress $T = {\{T_{t}\}}_{t \in \R} $ always forms an one-parameter continuous subgroup of transformations of an appropriate Lie group $G$. In particular, when $t$ belongs to an finite subset of the straight line $\R$, then $T = \{T_{t} \}$ forms a finite one-parameter continuous subgroup of $G$. Therefore, we can {\it apply Lie groups in general, one-parameter subgroup of transformations of Lie groups in particular to study the technical progress functions in Economy}. 

%Remark 1.2.4
\begin{rem}\label{R124} It may be argued that Assumptions \ref{Ass122}  are too restrictive, for there may be many types of technical progress operating in an economy which do not satisfy these restrictions. {\it However, it should be noted that all of the known types discussed in the economic literature thus far do in fact satisfy the foregoing Assumption} \ref{Ass122} (see \cite{rS1981}, p.27).
\end{rem}

%subsection 1.3
\subsection{Holotheticity of a Production Function under a Given Type of Technical Progress}

%Definition 1.3.1
\begin{defn}\label{D131}({\cite[Definition 2]{rS1981}})\\
If the action of technical progress $T = {\{T_{t} = (K_{t}, L_{t}) \}}$ on a production function $Y = f (K, L)$ is represented by some family $\{F_t (Y)\}$ of transformations $ F_t (Y) $ which is strictly monotone with respect to the parameter $t$, then the production function $Y = f (K, L)$ is said to be ``holothetic" (complete-transformation type) under the given technical progress $T$.

In other words, a production function $Y = f (K, L)$  is said to be {\it holothetic under a given technical progress $T = {\{T_{t} = (K_{t}, L_{t}) \}}$} if there exists a family $\{F_t (Y)\}$ of transformations $ F_t (Y) $ which is strictly monotone with respect to the parameter $t$ such that
      \begin{align}\label{EQ:Yt}
                    Y_{t} &= f(K_{t}, L_{t}) \notag\\
                             &= F_{t}\Bigl(f(K, L)\Bigr) = F_{t}(Y)
      \end{align}
for all parameter $t$ and $Y_0 = Y$, i.e. $f(K_0, L_0) = f(K, L)$.
\end{defn}

%Remark 1.3.2          
 \begin{rem}\label{R132}({\cite[page 24]{rS1981}}) 
           The following facts are immediate from the holotheticity of a production function under under a given technical progress:
            \begin{enumerate}
            \item[(1)] The total impact of technical progress is completely transformed to a scale effect. Hence the isoquant map, before and after technical progress, is unchanged other than the relabeling of its isoquants.
            \item[(2)] The marginal rate of substitution between capital and labor is unaffected by technical progress.
            \item[(3)] The technical progress functions transform every production function into another function of the same family. Hence a family of production functions is invariant under the technical progress transformation.
            \end{enumerate}
          \end{rem}
      
%subsection 1.4 Saito's Results
\subsection{The Results of Saito (see~\cite{rS1980}, \cite{rS1981})}

%Proposition 1.4.1           
            \begin{prop}[{\bf Existence of Holothetic Production Function}]            	
                If $T = \{T_t\}$ in Equation (\ref{EQ:Tt}) is a given technical progress of the Lie type (i.e. it satisfies the Lie group properties in Assumptions \ref{Ass122}), then there exists one and only one   production function $Y = f(K, L)$ which is holothetic under $T = \{T_t\}$, such that Equation (\ref{EQ:Yt}) holds. Hence there exists a general family $\{Y_t\}$ of production functions under which the total effect of technical progress $T = \{T_t\}$ is completely transformed to a returns to scale effect. \hfill $\square$
            \end{prop}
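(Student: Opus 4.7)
The plan is to convert the group-theoretic statement about $T=\{T_t\}$ into the infinitesimal picture and then solve an ODE/PDE. First, since the axioms (1)--(3) of Assumption \ref{Ass122} say exactly that $T$ is a one-parameter (local) transformation group, I would pass to its infinitesimal generator
\[
    X \;=\; \xi(K,L)\,\frac{\partial}{\partial K} \;+\; \eta(K,L)\,\frac{\partial}{\partial L},
    \quad \xi := \left.\frac{\partial \varphi_t}{\partial t}\right|_{t=0},
    \quad \eta := \left.\frac{\partial \psi_t}{\partial t}\right|_{t=0}.
\]
The independence condition on $(\varphi_t,\psi_t)$ guarantees $X\not\equiv 0$, and the flow of $X$ reproduces $T_t$, so no information is lost in passing to $X$.

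Next, I would translate the holotheticity requirement (\ref{EQ:Yt}) into a PDE. Differentiating $f(K_t,L_t)=F_t(f(K,L))$ at $t=0$ yields
$
    Xf \;=\; g\circ f,
$
where $g(Y):=\dot F_0(Y)$ is a function of $Y$ alone; the strict monotonicity of $F_t$ in $t$ forces $g$ to be sign-definite. Conversely, any $f$ satisfying $Xf=g\circ f$ for some such $g$ is holothetic, because the curve $t\mapsto f(K_t,L_t)$ then solves an autonomous ODE in $t$ whose flow depends only on the initial value $f(K,L)$; integrating produces the promised $F_t$.

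Existence is now reduced to solving a first-order quasilinear PDE. I would invoke the flow-box theorem: near any non-stationary point of $X$ there are local coordinates $(s,u)$ in which $X=\partial_s$, and the equation $\partial_s\tilde f=g(\tilde f)$ separates to give $\int df/g(f)=s+h(u)$ for arbitrary $h$. Taking the canonical normalisation $g\equiv 1$ and $h\equiv 0$ produces the distinguished solution $f(K,L)=s(K,L)$, for which holotheticity takes the especially clean form $Y_t=Y+t$; this gives the family $\{Y_t\}$ announced in the second sentence of the proposition.

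The delicate point is the uniqueness clause. Read literally, ``one and only one'' is too strong, because the construction above produces solutions parametrised by an arbitrary function of the transverse first integral $u$ together with the freedom to re-choose $g$. What is actually unique, and what I would argue Saito means, is uniqueness up to monotone reparametrisation of the output $Y$: any two holothetic $f,\tilde f$ associated to the same $T$ share the isoquant structure dictated by the level sets of $X$, hence differ only by a strictly monotone function applied to $Y$. This matches Remark \ref{R132}(1), which says the isoquant map is unchanged up to relabeling. The main obstacle I anticipate is therefore less the construction itself than the verification that the resulting $f$ is genuinely neoclassical (quasi-concave, with positive marginal products); this does not follow from the Lie structure alone and would require additional monotonicity/convexity hypotheses on the components $\xi,\eta$ of $X$, which I would isolate explicitly rather than attempt to check in full generality.
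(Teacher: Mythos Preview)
The paper does not actually prove this proposition: it is listed in Section~1.4 among ``The Results of Saito'' and the statement itself ends with a $\square$, with no accompanying proof environment. The proposition is simply quoted from \cite{rS1980,rS1981} as background. So there is no ``paper's own proof'' to compare your proposal against.

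That said, your sketch is essentially the standard Lie-theoretic argument and is, as far as I can tell, what Saito does in \cite{rS1981}: pass to the infinitesimal generator $X$, rewrite holotheticity as $Xf=g(f)$, and solve via straightening/flow-box coordinates. Your observation that the literal ``one and only one'' is false and must be read as uniqueness up to monotone relabeling of isoquants is correct and worth flagging; this is exactly the sense intended in Remark~\ref{R132}(1). Your caveat that quasi-concavity and positivity of marginal products do not come for free from the Lie structure is also well placed; Saito imposes side conditions for this. So the proposal is sound as a reconstruction, but be aware you are supplying an argument the present paper deliberately omits.
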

        
%Proposition 1.4.2
            \begin{prop}[{\bf Possibility of the Estimation of Technical Progress}]
                The effect of technical progress $T = \{T_t\}$ and the scale effect of $Y = f(K, L)$ are independently identifiable if and only if the production function $f$ is not holothetic under the given type of technical progress $T$. \hfill $\square$
            \end{prop}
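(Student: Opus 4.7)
The plan is to prove both directions by directly invoking Definition \ref{D131} and Remark \ref{R132}, since the proposition is essentially a logical rephrasing of the definition of holotheticity once one fixes what \emph{independently identifiable} should mean. I will adopt the convention, standard in the economic-theoretic setting Saito works in, that the technical progress effect and the scale effect are independently identifiable precisely when the map $(K,L,t) \mapsto f(K_t,L_t)$ is not expressible as a strictly monotone transformation $F_t(Y)$ of the output $Y=f(K,L)$ alone. Under this formalization the biconditional becomes a direct unfolding of Definition \ref{D131}, and the entire proof has the character of a contrapositive-style tautology made meaningful by the Lie-group structure guaranteed by Assumption \ref{Ass122}.

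For the ($\Leftarrow$) direction I would argue the contrapositive: suppose the two effects are \emph{not} independently identifiable. Then by the convention above, there exists a family $\{F_t\}$ of strictly monotone transformations of $Y$ such that $f(K_t,L_t)=F_t(f(K,L))$ for all admissible $(K,L)$ and all $t$. This is exactly the condition in Equation (\ref{EQ:Yt}), so $f$ is holothetic under $T$. The monotonicity of $F_t$ is not an extra assumption one needs to impose: it follows from the Lie group properties (1)--(3) in Assumption \ref{Ass122} together with the non-degeneracy of the Jacobian of $(\varphi_t,\psi_t)$, which together force the induced action on $Y$ to be an invertible one-parameter family, hence (on a one-dimensional range) strictly monotone.

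For the ($\Rightarrow$) direction I would again argue contrapositively: suppose $f$ is holothetic under $T$. By Definition \ref{D131} there is a strictly monotone family $\{F_t\}$ with $f(K_t,L_t)=F_t(f(K,L))$. The change in output caused by $T$ therefore depends only on $Y$ itself and not separately on $(K,L)$; moreover, by Remark \ref{R132}(1)--(2) the isoquant map is only relabeled and the marginal rate of substitution is preserved. Consequently every observable consequence of applying $T$ to the production function is reproducible by a pure scale relabeling of $Y$, so no empirical procedure working from the map $(K,L,t)\mapsto Y_t$ can separate the two effects, and identifiability fails.

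The main obstacle is not any intricate computation but the need to pin down \emph{independently identifiable} in a form that makes the biconditional a theorem rather than a definition in disguise; this is why I would state the convention explicitly at the start, drawing on the econometric discussion in Saito \cite{rS1980,rS1981}. Once that is fixed, the only technical point requiring care is the strict monotonicity of $\{F_t\}$ in the $(\Leftarrow)$ direction, which I would deduce from the fact that $\{T_t\}$ is a genuine one-parameter subgroup (Assumption \ref{Ass122}) together with the non-vanishing Jacobian condition imposed on $(\varphi_t,\psi_t)$; these together guarantee that the induced one-parameter action on the output is invertible and depends continuously (in fact analytically) on $t$, which on the real output axis forces strict monotonicity.
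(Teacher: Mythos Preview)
The paper does not actually prove this proposition: it is listed under ``The Results of Saito'' and is stated with a terminal $\square$ and no argument, the content being deferred entirely to the references \cite{rS1980,rS1981}. So there is no proof in the paper against which to compare your proposal.

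As for the proposal itself, you are candid that once ``independently identifiable'' is \emph{defined} to mean ``$f(K_t,L_t)$ is not of the form $F_t(f(K,L))$ for a strictly monotone family $\{F_t\}$,'' the biconditional collapses to the contrapositive of Definition~\ref{D131}. That is logically correct but, as you yourself note, it makes the proposition a definition in disguise rather than a theorem. The genuine content of Saito's result lies in justifying that this formal condition coincides with the \emph{econometric} notion of identifiability (i.e., whether the parameters governing $T$ and the scale of $f$ can be separately recovered from data on inputs and outputs); your appeal to Remark~\ref{R132}(1)--(2) gestures at this, but the substantive step---linking observational equivalence of isoquant maps to statistical non-identifiability---is asserted rather than argued. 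The side claim that strict monotonicity of $\{F_t\}$ is forced by the Lie-group axioms and the non-vanishing Jacobian is also not fully justified: invertibility of $T_t$ on the $(K,L)$-plane does not by itself imply that the induced map on the one-dimensional output is strictly monotone in $t$ without further hypotheses on $f$. These are the points a complete proof (as in Saito's original) would need to address.
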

        
%Proposition 1.4.3
            \begin{prop}[{\bf Existence of a Lie Type Technical Progress}]
                For every production function $Y = f(K, L)$, there exists at least one Lie type of technical progress $T = \{T_t\}$ such that $f$ is holothetic under $T$. \hfill $\square$
            \end{prop}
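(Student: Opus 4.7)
The plan is to construct, for any given production function $f(K,L)$ satisfying the paper's standing hypotheses (positive, continuously differentiable, with nonvanishing gradient), a one-parameter flow $\{T_t\}$ arising from a suitable vector field, and to verify that it meets all the requirements of a Lie-type technical progress under which $f$ is holothetic. The key idea is that the roles can be reversed: instead of first choosing $T_t$ and then checking that $f(T_t(K,L))$ depends only on $f(K,L)$, we first prescribe how the output is to be transformed and then lift this transformation back to the $(K,L)$-space using the gradient of $f$.

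Concretely, I would begin by choosing a convenient strictly monotone one-parameter group $\{F_t\}$ on the output space — for example $F_t(Y)=Y+t$, or $F_t(Y)=e^{t}Y$ if one wants positivity preserved — and let $g(Y)$ be the infinitesimal generator (so $g(Y)=1$ in the first case, $g(Y)=Y$ in the second). Then I would define a vector field on the $(K,L)$-plane by
\begin{equation*}
X(K,L) \;=\; \frac{g\bigl(f(K,L)\bigr)}{\lvert \nabla f(K,L)\rvert^{2}}\,\nabla f(K,L),
\end{equation*}
which is well defined because the neoclassical assumption on $f$ gives $\nabla f\neq 0$. A direct computation shows $X\cdot\nabla f = g(f)$. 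Consequently, along any integral curve $(K(t),L(t))$ of $X$ the output $Y(t)=f(K(t),L(t))$ satisfies $\dot Y = g(Y)$, i.e. $Y(t)=F_{t}(Y(0))$; this is precisely the holotheticity identity~(\ref{EQ:Yt}).

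The next step is to define $T_{t}(K,L)$ as the time-$t$ flow map of $X$. Standard ODE theory (smooth dependence on initial data and uniqueness of integral curves of a $C^{1}$ vector field) yields automatically the three conditions of Assumption~\ref{Ass122}: $T_{0}=\mathrm{Identity}$, $T_{t}\circ T_{\bar t}=T_{t+\bar t}$, and $T_{t}^{-1}=T_{-t}$. The Jacobian non-degeneracy condition built into~(\ref{EQ:Tt}) also follows, since each $T_{t}$ is a local diffeomorphism. Thus $T=\{T_{t}\}$ is a Lie type of technical progress, and by construction $f$ is holothetic under it with transformation family $\{F_{t}\}$.

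The one genuine obstacle I anticipate is global existence: nothing in the construction guarantees a priori that the integral curves of $X$ remain in the admissible region $\{K>0,\;L>0\}$ for all $t\in\mathbb{R}$, or that they are defined for all time rather than only on some open interval around $0$. This is handled in two acceptable ways. First, since Section~1.1 explicitly allows the parameter $t$ to range over a finite subset of $\mathbb{R}$, the local flow already satisfies the required group axioms wherever it is defined, and ``at least one'' technical progress suffices. Second, one can tailor the choice of $g$ (for instance $g(Y)=Y$, so $F_{t}(Y)=e^{t}Y$) and, if necessary, pre-compose $X$ with a positive scalar multiplier depending on $(K,L)$ that decays near the boundary $KL=0$, which rescales time along each trajectory without changing the trajectory itself and thus ensures completeness of the flow while preserving holotheticity (the new $\{F_{t}\}$ is just a reparametrization of the old one and remains strictly monotone). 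Either route delivers the desired Lie-type $T=\{T_{t}\}$.
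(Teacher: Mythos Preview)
The paper itself does not prove this proposition: it is listed under ``The Results of Saito'' and closed immediately with a $\square$, the content being delegated to \cite{rS1980} and \cite{rS1981}. There is therefore no in-paper argument to compare your proposal against.

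Judged on its own, your construction is correct and is in fact the standard Lie-theoretic route. The infinitesimal version of the holotheticity condition~(\ref{EQ:Yt}) is exactly $Xf = g(f)$ for some function $g$ of one variable, and taking $X$ proportional to $\nabla f$ with the normalizing factor $g(f)/\lvert\nabla f\rvert^{2}$ is the most direct way to satisfy it. The flow of such a vector field automatically has the group properties of Assumption~\ref{Ass122}, and your treatment of the completeness issue (either invoking the paper's allowance that $t$ range over a bounded set, or damping $X$ near the boundary of the positive quadrant) is honest and adequate for the level of rigor of the surrounding text. This is, in spirit, precisely how Saito argues in \cite[Chapter~2]{rS1981}, where the existence statement is phrased via the infinitesimal generator $U = \xi\,\partial/\partial K + \eta\,\partial/\partial L$ and the condition $Uf = G(f)$.

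One small caveat worth flagging: the paper stipulates that the components $\varphi,\psi$ of a technical progress be \emph{analytic}, while $f$ is only assumed continuously differentiable. Your vector field $X$ inherits only the regularity of $\nabla f$, so the resulting flow need not be analytic (and, strictly speaking, you need $f\in C^{2}$ even to invoke uniqueness and smooth dependence for a $C^{1}$ vector field, as you do). This tension is already present in the paper's own hypotheses and does not undermine the substance of your argument; in practice neoclassical production functions are taken to be smooth.
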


%section 2 Example and Main Result
\section{The Main Result}

In this section, we  will firstly introduce an example of Lie type technical progress $T = \{T_t\}$ and one type of production functions which is holothetic under $T$. That is exactly the Cobb-Douglas function with exogenous technical progress. Next, we introduce the illustration result of G. Zaman and Z. Goschin in \cite{ZG2010} and give the main result of the paper which is an empirical research in Viet Nam.

%subsection 2.1 Example of Lie Type of Technology Progress    
\subsection{The Cobb-Douglas Production Function as a Holothetic Function under Technical Progress}

%Definition 2.1.1
\begin{defn}\label{D211}({\bf The Cobb-Douglas Production Function})\\
     Let $a, \alpha, \beta$ be some positive real constants. 
	The Cobb-Douglas production function is given by     
		\begin{equation}\label{EQ21}
			Y = f(K, L) : = a.K^{\alpha}L^{\beta}.
		\end{equation}
\end{defn}

%\noindent Upon direct computation, we get the following proposition.
The following proposition state one of the most important properties of the Cobb-Douglas production function.
%Proposition 2.1.2
\begin{prop}\label{Prop212}{\bf (An Lie Type Technical Progress under \\
		Which the Cobb-Douglas Is Holothetic)}
%The following assertions hold.
	\begin{enumerate}
		\item[(1)] {\bf (see \cite[page 2, 3]{rS1981})}  Let $T = \{T_t \}$ be the family of pairs $T_t = (K_t, L_t)$ of the following functions
			\begin{equation}\label{EQ22}
				K_t = e^{\lambda t}K; \,\, L_t = e^{\lambda t}L; \, \, \, t \in \R;
			\end{equation}
		\item[] where $\lambda$ is a some positive real constant. Then $T = \{T_t\}$ is a technical progress of the Lie type.
		
		\item[(2)] The Cobb-Douglas production function defined by Formular (\ref{EQ21}) is holothetic under the Lie type technical progress $T = \{T_t\}$ in Formular (\ref{EQ22}) above. 
		\hfill $\square$	
	\end{enumerate}
\end{prop}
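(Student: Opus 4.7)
The plan is to treat the two parts separately, both of which reduce to direct verification from the explicit formulas in (\ref{EQ21}) and (\ref{EQ22}).

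For part (1), I would check the three Lie group axioms listed in Assumption \ref{Ass122}, plus the independence (non-vanishing Jacobian) condition required in the definition of a technical progress. The key algebraic input is simply the exponential identity $e^{\lambda(t+\bar t)}=e^{\lambda t}\,e^{\lambda \bar t}$: composing $T_t$ with $T_{\bar t}$ coordinatewise yields $(e^{\lambda t}e^{\lambda \bar t}K,e^{\lambda t}e^{\lambda \bar t}L)=T_{t+\bar t}$, which is axiom (1); taking $t=0$ gives $T_0=\text{Identity}$, which is axiom (3); and taking $\bar t=-t$ in axiom (1) then produces $T_t\circ T_{-t}=T_0=\text{Identity}$, i.e.\ $T_t^{-1}=T_{-t}$, which is axiom (2). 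For the independence condition, the Jacobian matrix of $(K,L)\mapsto(e^{\lambda t}K,e^{\lambda t}L)$ is diagonal with determinant $e^{2\lambda t}$, which never vanishes.

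For part (2), the strategy is to compute $f(K_t,L_t)$ directly using the Cobb--Douglas form and factor out the pure $t$-dependence. Substitution gives
\[ f(K_t,L_t)=a\,(e^{\lambda t}K)^{\alpha}(e^{\lambda t}L)^{\beta}=e^{\lambda(\alpha+\beta)t}\cdot a K^{\alpha}L^{\beta}=e^{\lambda(\alpha+\beta)t}\,f(K,L), \]
so it is natural to define $F_t(Y):=e^{\lambda(\alpha+\beta)t}Y$. This $F_t$ automatically satisfies $F_0(Y)=Y$, which is the normalization $Y_0=Y$ in Definition \ref{D131}, and because $\lambda,\alpha,\beta>0$ the factor $e^{\lambda(\alpha+\beta)t}$ is a strictly increasing function of $t$ for each fixed $Y>0$, so $F_t$ is strictly monotone in the parameter $t$ as required.

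No step here presents a genuine obstacle; the proof is essentially one line of algebra combined with a check of definitions. The only point requiring mild care is to exhibit $F_t$ as a transformation of the scalar $Y$ alone, not depending on $(K,L)$ in any further way, so that the family $\{F_t\}$ acts on the output variable as the definition of holotheticity demands. Once $F_t(Y)=e^{\lambda(\alpha+\beta)t}Y$ is identified, this is automatic.
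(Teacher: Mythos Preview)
Your proof is correct and follows essentially the same approach as the paper: direct verification of the Jacobian and the three Lie group axioms for part (1), and explicit substitution into the Cobb--Douglas formula for part (2). Your computation in part (2) is in fact more careful than the paper's, which writes $F_t(Y)=e^{\lambda t}Y$ by appealing to ``homogeneity'' --- valid only when $\alpha+\beta=1$ --- whereas your $F_t(Y)=e^{\lambda(\alpha+\beta)t}Y$ works for arbitrary positive $\alpha,\beta$ as allowed in Definition~\ref{D211}.
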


%Proof of Proposition 2.1.2
\begin{proof}
	\begin{itemize}
		\item[(a)] {\bf The Proof of (1)}: From Formular (\ref{EQ22}), we have
		\[ K_t  = \varphi (K, L, t) : = e^{\lambda t}K, \, \, L_t  = \psi (K, L, t) : = e^{\lambda t}L; \, \, \forall t.\]
		\item[] It is obviuos that $\varphi$ and $\psi$ are analytic real functions of the three variables $ K, \, L, \, t$. Moreover, $ \varphi $ and $ \psi $  are independent functions with respect to $ K $ and $ L $ alone because
		\begin{equation*}
				\left|
		\def\arraystretch{2.2}
		\begin{array}{cc}
		\dfrac{\partial \varphi}{\partial K} & \dfrac{\partial \varphi}{\partial L}\\*[\baselineskip]
		\dfrac{\partial \psi}{\partial K}      & \dfrac{\partial \psi}{\partial L}
		\end{array}
		\right| 
			= \begin{vmatrix}
			e^{\lambda t} & 0\\
			0 & e^{\lambda t}
			\end{vmatrix}
			= e^{2 \lambda t} > 0; \forall t. \notag
		\end{equation*}
		\item[] Therefore, $T = \{T_t = (K_t, L_t)\}$ is a technical progress.
		\item[] Besides, it is clear that
		\[e^{\lambda (t + \bar{t})} = e^{\lambda t} e^{\lambda \bar{t}}, \,  {\bigl(e^{\lambda t}\bigr)}^{-1} = e^{\lambda (-t)}, \, e^{\lambda .0} = 1; \, \, \forall t, \bar{t}.\]
		Therefore $T = \{T_t = (K_t, L_t)\}$ satisfy the conditions (1), (2), (3) in Assumption \ref{Ass122}. So that $T = \{T_t = (K_t, L_t)\}$ is a Lie type technical progress.
		\item[(b)] {\bf The Proof of (2)}:	It follows from the homogeneity of the Cobb-Douglas function that
		\[Y_t = f(K_t, L_t) = f(e^{\lambda t} K, e^{\lambda t} L) = e^{\lambda t} f(K, L) =  F_t (Y); \forall t\]
		\item[] where $F_t (Y) : = e^{\lambda t}.Y$ forall $t$.
		\item[] It is obvious that $F_t (Y) : = e^{\lambda t}.Y$ is strictly monotone with respect to the parameter $t$ and $F_0 (Y) = Y$ because the exponent function $e^{\lambda t}$ has the same property.
		So that the Cobb-Douglas production function is holothetic under the given Lie type technical progress $T = \{T_t\}$.
	\end{itemize}		
The proof is complete.
\end{proof}
%subsection 2.2  The illustration result in Romania
\subsection{The Illustration in Romania}
   As mentioned in Introduction, in 2010, G. Zaman and Z. Goschin \cite{ZG2010} developed several models to estimate the aggregated production function GDP of Romania for the 1990--2007 period, and to assess the impact of technical progress on the  economy growth of Romania. Basically, models can be divided into two types based on points of view: technical progress is endogenous or exogenous. 
   
   When treating technical progress as exogenous, the authors obtained the following results: 
   
    	   \begin{prop}[{\cite[Model 1]{ZG2010}}]\label{P221}
	For the 1990--2007 period, the GDP function of Romania is given by 
			\[
			\mathrm{GDP}(t) =0.021.e^{0.0105t}K^{0.3564}L^{0.7783}; \, \, t \geq 0. 
			\] \hfill $\square$
    		\end{prop}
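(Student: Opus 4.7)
The plan is to treat the assertion as an econometric estimation result: start from the exogenous Hicks-neutral Cobb--Douglas specification, linearize, and fit by ordinary least squares to Romanian macro data. Concretely, combining Definition \ref{D211} with the Lie-type technical progress $T_t = (e^{\lambda t}K, e^{\lambda t}L)$ from Proposition \ref{Prop212} (or, equivalently, absorbing the technical-progress factor into a multiplicative term $e^{\lambda t}$ in front), one writes the candidate GDP function as
\begin{equation*}
\mathrm{GDP}(t) = a\, e^{\lambda t}\, K^{\alpha}\, L^{\beta},
\end{equation*}
with four unknown parameters $a, \lambda, \alpha, \beta > 0$ to be identified from data.

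Next I would log-linearize to obtain
\begin{equation*}
\ln \mathrm{GDP}(t) = \ln a + \lambda\, t + \alpha \ln K + \beta \ln L,
\end{equation*}
which is an ordinary multiple linear regression in the regressors $t$, $\ln K$, $\ln L$. The data inputs required are the annual Romanian GDP, the stock of fixed capital, and the employed labor force for the years $1990,1991,\dots,2007$; following \cite{ZG2010}, these are taken from the Romanian National Institute of Statistics (possibly rebased to constant prices to avoid inflation bias). I would then run OLS on this log-linear model, read off the estimated slope coefficients as $\hat{\lambda} \approx 0.0105$, $\hat{\alpha} \approx 0.3564$, $\hat{\beta} \approx 0.7783$, and recover $\hat{a} = \exp(\widehat{\ln a}) \approx 0.021$ from the intercept.

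Finally, substituting these four fitted values back into the Cobb--Douglas expression yields the claimed formula
\begin{equation*}
\mathrm{GDP}(t) = 0.021\, e^{0.0105 t}\, K^{0.3564}\, L^{0.7783}, \quad t \geq 0,
\end{equation*}
which completes the estimation.

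I expect the main obstacle to be not the algebra but the empirical/methodological justification: ensuring that the chosen capital and labor series are consistent across the transition-era break in Romanian statistics, that the log-linearized model satisfies the usual OLS assumptions (no severe multicollinearity between $t$ and $\ln K$, approximately homoskedastic residuals, stationary behaviour of the errors), and that the reported digits are indeed the OLS point estimates at the significance levels claimed in \cite{ZG2010}. Since the statement is quoted directly from that reference, the rigorous content of the proof is really a citation together with a verification that the functional form above is the one their regression fits, which is exactly the Lie-holothetic Cobb--Douglas model guaranteed by Proposition \ref{Prop212}.
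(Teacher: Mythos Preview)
Your proposal is correct and matches the paper's treatment: the paper gives no proof of Proposition~\ref{P221} at all, simply quoting it from \cite{ZG2010} and marking it with a $\square$. Your log-linearization plus OLS outline is exactly the methodology the paper itself carries out for the analogous Viet Nam estimate (Equations~(\ref{EQ23}) and~(\ref{EQ25}) and Theorem~\ref{T233}), and you correctly recognize at the end that the substantive content here is a citation.
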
 

%Remark 2.2.2    
	   \begin{rem}\label{R222}
	Note that the production function model and the technical progress mentioned in Proposition \ref{P221} are the same as the production function and the technical progress shown in Formulars (\ref{EQ21}) and (\ref{EQ22}).        
    \end{rem}

%subsection 2.3 The Main Result		   
    \subsection{Estimating the GDP Function of Viet Nam for the 1995-2018 Period} 
    As a problem for empirical research in Viet Nam, we also want to estimate the GDP function of Viet Nam for the 1995--2018 period by applying Saito's results and using the similar method of G. Zaman and Z. Goschin in \cite{ZG2010}. 
    
    However, stemming from the fact that, the technical progress in economy of Viet Nam mainly come from external transfers for many years, while the domestic investment in research and development (R\&D) is very limit. Namely, according to World Bank data, the R\&D investment of Viet Nam in 2011 and earlier did not exceed 0.2\% of GDP and only increased approximately to 0.53\% of GDP in 2018. Therefore, models in which technical progress is endogenous are inappropriate. So we just need to investigate the model in which technical progress is exogenous.
    
%Assumption 2.3.1 
\begin{assum}\label{Ass231}

    In view of Proposition \ref{Prop212}, we will use the model of the Cobb-Douglas production function for GDP of Viet Nam. Based on Formulars (\ref{EQ21}) and (\ref{EQ22}), the aggregated production function GDP of Viet Nam is accepted as the following formula:
	\begin{equation}\label{EQ23}
	\mathrm{GDP}(t) = a. e^{ \gamma t} K^{\alpha} L^{\beta} 
	\end{equation}
	where $a, \,  \alpha, \, \beta, \, \gamma$ are certain positive contants; $t$ is time parameter. 
\end{assum}  
Thus, our task is to use ``skillfully" the statistical data of Viet Nam to analyze and estimate the parameters $a, \, \alpha, \, \beta, \, \gamma$ in Formula (\ref{EQ23}).

%Data sources     
\begin{dataso}\label{Dataso232}
We can find data on rate of GDP increase, labor and investment at Annual Statistical Report issued by the Viet Nam National General Statistics Office.	
\end{dataso}
\noindent However, by the way, there are no statistics about the capital in Viet Nam. Therefore, the amount of the capital for each year can be calculated by:
	 \begin{equation}\label{EQ24}
        K(t) = K(t-1) + I (t) - \sigma ( I(t)/2+K(t-1) ). 
       \end{equation}
Here, $K(t)$ is the amount of the capital for year $t$, $\sigma$ is fixed asset depreciation rate, $I(t)$ is the investment for the year $t$ with $t$ from 1995 to 2018. Based on data provided by Viet Nam National General Statistics Office, we calculate the capital $K$ by the formula (\ref{EQ24}).
	 
	About the labor $L$, of course we only count the labor force from 16 years old and get paid on their job. The Stata software is used to filter data and eliminate the trend of the data before analysing data. 

From Formular (\ref{EQ23}), taking the logarithm we get
	 \begin{equation}\label{EQ25}
		\ln(\mathrm{GDP}(t)) = \gamma t + \alpha \ln(K) + \beta \ln(L) + \ln a 
	\end{equation}

Combining two data groups: before 2000 and after 2000, and then analyzing them by Stata software, we get the values of estimating the parameters of the GDP function in Table 1 below. 
	\begin{figure} [!htp]
		\centering 
		\includegraphics [scale=0.84] {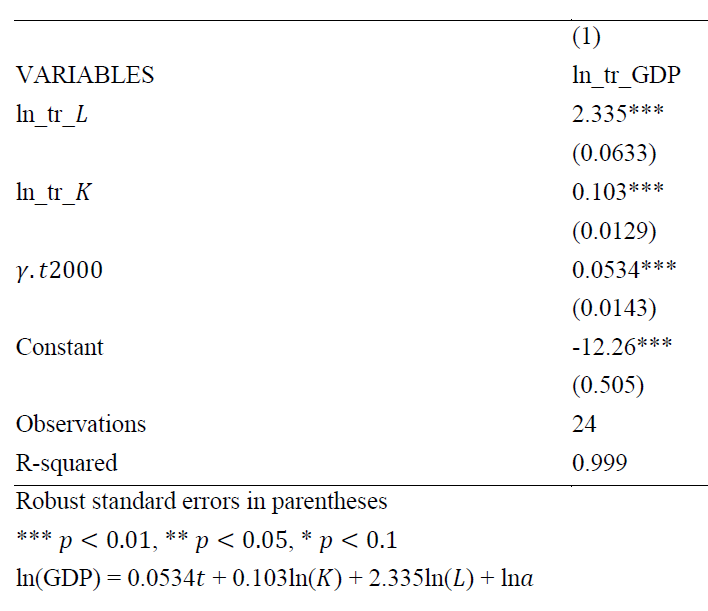}
		\centerline{Table 1: The results of estimating the parameters.}
	\end{figure}

Now, the next theorem, which is the main result of the paper, is followed immediately.
%Theorem 2.3.1
    	   \begin{thm}\label{T233}
    		For the 1995-2018 period, the aggregated production function GDP of Viet Nam is given by \[\mathrm{GDP}(t) \approx 0.000005.e^{0.053t}K^{0.103}L^{2.335}; \, \, t \geq 0.\]
    	  \end{thm}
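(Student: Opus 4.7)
The plan is to treat Theorem \ref{T233} as an estimation problem under the functional ansatz \eqref{EQ23} guaranteed by Assumption \ref{Ass231}, and to exhibit the four numerical constants $a$, $\alpha$, $\beta$, $\gamma$ through a log-linear regression on Vietnamese statistical data. Accordingly, the first step is to rewrite \eqref{EQ23} in its additive logarithmic form \eqref{EQ25}, so that the unknown parameters enter linearly and ordinary least squares becomes applicable.

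The second step is to assemble the three time series $\bigl(\mathrm{GDP}(t), K(t), L(t)\bigr)$ for $t$ running from 1995 to 2018, using Data sources \ref{Dataso232}. Since the capital stock is not directly reported, I would generate $K(t)$ recursively from \eqref{EQ24}, taking the investment series $I(t)$ and the depreciation rate $\sigma$ as inputs and anchoring the recursion at a base-year value $K(1994)$; for the labor series $L(t)$ I would restrict, as stipulated in the paper, to paid workers aged $16$ and above. The main obstacle here is that the recursion \eqref{EQ24} propagates any error in the initial capital stock and any mis-specification of $\sigma$ across the whole horizon, so the sensitivity of the final coefficients to these two choices has to be controlled.

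The third step is the econometric fit itself. I would import the three logarithmic series $\ln \mathrm{GDP}(t)$, $\ln K(t)$, $\ln L(t)$ together with the linear trend $t$ into Stata, detrend and filter the data to remove the structural break between the pre-$2000$ and post-$2000$ regimes (as the paper indicates, the two subperiods are pooled only after such filtering), and then run the OLS regression associated with \eqref{EQ25}. The estimated coefficient on $t$ gives $\widehat{\gamma}$, those on $\ln K$ and $\ln L$ give $\widehat{\alpha}$ and $\widehat{\beta}$, and the intercept gives $\widehat{\ln a}$, from which $\widehat{a}=\exp(\widehat{\ln a})$.

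The final step is to read off Table 1 and substitute the rounded estimates back into \eqref{EQ23}; this should yield $\widehat{a}\approx 5\times 10^{-6}$, $\widehat{\gamma}\approx 0.053$, $\widehat{\alpha}\approx 0.103$, $\widehat{\beta}\approx 2.335$, which is exactly the stated formula. I anticipate that the genuinely delicate part is not the algebra but the interpretation: the estimate $\widehat{\beta}\approx 2.335$ is far larger than one, so the fitted function exhibits strongly increasing returns to labor, and a serious proof would include diagnostics (standard errors, $R^{2}$, residual analysis, robustness to the choice of $\sigma$ and $K(1994)$) in order to justify reporting these coefficients as the Cobb-Douglas estimate for Viet Nam on $1995$--$2018$.
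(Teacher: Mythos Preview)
Your proposal is correct and mirrors the paper's own argument essentially step for step: linearize \eqref{EQ23} to \eqref{EQ25}, build $K(t)$ via the recursion \eqref{EQ24} and restrict $L(t)$ as stated, detrend and pool the pre-/post-2000 subsamples in Stata, run the OLS regression, and read the four coefficients off Table~1. The diagnostic and robustness checks you propose in your final paragraph go beyond what the paper actually carries out, but the core derivation is the same.
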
 

%Remark 2.3.4    	  
	\begin{rem} From the main result, we have some important remarks as follows:
	\begin{enumerate} 
		\item[(1)] Contribution rate of technology, capital $K$ and labor force $L$ on GDP growth:
		\begin{itemize}
		\item[(a)] The contribution rate of technology: $\gamma / ( \gamma + \alpha + \beta )$ = 2,1\%.
		
		\item[(b)] The contribution rate of capital $K$: $\alpha / ( \gamma + \alpha + \beta )$ = 4,2\%.
		
		\item[(c)] The contribution rate of  labor force $L$: $\beta / ( \gamma + \alpha + \beta )$ = 93,7\%.
		\end{itemize}
	
		\item[(2)] The labor force $L$ has the strongest impact on GDP growth. This is entirely reasonable in the context of the economy and society of Viet Nam. In fact, for many years, the labor costs in Viet Nam are lower than the average labor costs in the world, there is the so-called labor-intensive manufacturing in production. The cheap labor in Viet Nam for many years was one of the competitive advantages and an important engine of the economic growth.
		
		\item[(3)] The contribution of the capital $K$ is quite limited compared to the labor force $L$. It partly comes from the fact that a lot of spending in the population is not calculated precisely, since it is difficult to control cash consumption in the market (not through the banking system). Therefore, the calculation of $K$ is not accurate compared to reality.
		
		\item[(4)] The contribution of technical progress is the lowest. That is quite reasonable because the economy of Viet Nam, for many years ago, was mainly based on the processing, assembling and exporting raw materials. This also warns the managers in formulating necessary policies to intensify the contribution of science and technology towards the knowledge economy in the context of the 4.0 technological revolution in the world.
		
		\item[(5)] The constant $a$ (total factor productivity) is too small. This is predictable because it represents the productivity of the Vietnamese economy at the time of departure (1995), when there was almost no influence of technology and the size of the economy is too small. 
	\end{enumerate}
	\end{rem}

%Acknowledgements  
\section*{Acknowledgements} The authors would like to thank Viet Nam National University - Ho Chi Minh City, University of Economics and Law, VNU-HCMC and Hoa Sen University for financial support.

%: references
    
\end{document}